\definecolor{light-gray}{gray}{0.9}
\theoremstyle{theorem}
	\newtheorem{lemma}{Lemma}%
\newtheorem{theorem}{Theorem}%
\theoremstyle{remark}
\newtheorem{remark}{Remark}%\\	\
\newtheorem{corollary}{Corollary}%
\newtheorem{example}{Example}
	\newcommand\eat[1]{}
	\newlength{\wordlength}
	\newcommand{\eqclass}[2][]{\ifthenelse{\equal{#1}{}}{[#2]}{[#2]_{\sim_{#1}}}}
	\newcommand{\squared}[1]{\fbox{#1}}
\newcommand{\nbh}[1][]{
	\ifthenelse{\equal{#1}{}}{\nu}{\nu(#1)}
}
\newcommand{\cstr}[1][]{
	\ifthenelse{\equal{#1}{}}{\mathscr S}{\cstr(#1)}
}
\newcommand{\choice}[1][]{
	\ifthenelse{\equal{#1}{}}{\mathit{C}}{\choice(#1)}

		\newcommand{\ml}[1][]{\ensuremath{\ifthenelse{\equal{#1}{}}{\mathit{ML}}{\mathit{ML}(#1)}}\xspace}
		\newcommand{\sml}[1][]{\ensuremath{\ifthenelse{\equal{#1}{}}{\mathit{SML}}{\mathit{SML}(#1)}}\xspace}
		\newcommand{\sd}[1][]{\ensuremath{\ifthenelse{\equal{#1}{}}{\mathit{SD}}{\mathit{SD}(#1)}}\xspace}
		\newcommand{\rsd}[1][]{\ensuremath{\ifthenelse{\equal{#1}{}}{\mathit{RSD}}{\mathit{RSD}(#1)}}\xspace}
		\newcommand{\rd}[1][]{\ensuremath{\ifthenelse{\equal{#1}{}}{\mathit{RD}}{\mathit{RD}(#1)}}\xspace}
		\newcommand{\st}[1][]{\ensuremath{\ifthenelse{\equal{#1}{}}{\mathit{ST}}{\mathit{ST}(#1)}}\xspace}
		\newcommand{\bd}[1][]{\ensuremath{\ifthenelse{\equal{#1}{}}{\mathit{BD}}{\mathit{BD}(#1)}}\xspace}
		\newcommand{\pc}[1][]{\ensuremath{\ifthenelse{\equal{#1}{}}{\mathit{PC}}{\mathit{PC}(#1)}}\xspace}
		\newcommand{\dl}[1][]{\ensuremath{\ifthenelse{\equal{#1}{}}{\mathit{DL}}{\mathit{DL}(#1)}}\xspace}
		\newcommand{\ul}[1][]{\ensuremath{\ifthenelse{\equal{#1}{}}{\mathit{UL}}{\mathit{UL}(#1)}}\xspace}

			\newcommand{\indiff}{\ensuremath{\sim}}}
\begin{document}

\title{A polynomial-time algorithm for computing a Pareto optimal and almost proportional allocation}

		% \title{Generalizing Top Trading Cycles\\ for Housing Markets with Fractional Endowments\\\vspace{1em} \small Draft: Please do not cite or circulate. Comments welcome!}

%: Computational and Manipulation Aspects  }%\tnoteref{t1}} 
	%\tnotetext[t1]{Thanks!} 
	% \tnotetext[t2]{The second title footnote which is a longer than the first one and with an intention to fill
	% in up more than one line while formatting.}

	\author{Haris Aziz\corref{cor1}} \ead{haris.aziz@unsw.edu.au}
	% \author{Markus Brill} \ead{brill@in.tum.de}
	% 	\author{Paul Harrenstein} \ead{harrenst@in.tum.de}
		\address{UNSW Sydney and Data61 CSIRO, Australia}
		
		\author{Herv{\'{e}} Moulin\corref{cor1}} \ead{herve.moulin@glasgow.ac.uk}
		\address{University of Glasgow, Glasgow, UK and Higher School of Economics, St. Petersburg, Russia}

		\author{Fedor Sandomirskiy\corref{cor1}} \ead{sandomirski@yandex.ru}
		\address{Technion, Haifa, Israel and Higher School of Economics, St. Petersburg, Russia}
	%\cortext[cor1]{Corresponding author} 

%	\fntext[fn1]{Thanks!} 
%%

% TODO:
% \begin{itemize}
% \item prove that \FTTC is strategyproof for dichotomous preferences.
% \item prove that \FTTC allocation is core stable for lexicographic preferences.
% %\item prove that \FTTC allocation with uniform endowments is equivalent to the egalitarian mechanism of \citet{BoMo04a}.
% \end{itemize}

	\begin{keyword}
	 Fair Division\sep
	Pareto optimality\sep
	Proportionality\sep
	% Core\sep
% 	Top Trading Cycles\sep
%	Dichotomous preferences\\
		
		\emph{JEL}: C62, C63, and C78
	\end{keyword}

	\begin{abstract}
		% We consider fair allocation of {indivisible} items under additive utilities. When the utilities can be negative, the existence and complexity of an allocation that satisfies Pareto optimality and proportionality up to one item (PROP1) is an open problem. We show that there exists a strongly polynomial-time algorithm that always computes an allocation satisfying Pareto optimality and proportionality up to one item even if the utilities are mixed and the agents have asymmetric weights.  We point out that the result does not hold if either of Pareto optimality or PROP1 is replaced with slightly stronger concepts.
		We consider fair allocation of indivisible items under additive utilities. We show that there exists a strongly polynomial-time algorithm that always computes an allocation satisfying Pareto optimality and proportionality up to one item even if the utilities are mixed and the agents have asymmetric weights.  The result does not hold if either of Pareto optimality or PROP1 is replaced with slightly stronger concepts. 
	\end{abstract}

	\maketitle

		%\today

		\sloppy

% \fed{Shall we use ``objects'' instead of ``items''? That would be more consistent with the letters $O$ and $o$ that we use everywhere.}
% \haris{I think it is fine to use items because item can be an issue rather than a physical object. Item seems more neutral as well so it's suitable for chores. }

		\section{Introduction}

	We consider fair allocation of indivisible items under additive utilities. 
	For an agent, an item can be a good (yielding positive utility) or a chore (yielding negative utility). Fair allocation of indivisible items has received renewed interest since it was proved that a \emph{Pareto optimal (PO)} and \emph{envy-free up to one item (EF1)} allocation exists for positive utilities~\citep{CKM+16}. However, the existence and complexity of a Pareto optimal (PO) and EF1 allocation is open when utilities may be negative. The complexity of computing such an allocation is also open for the case of positive utilities. 
	In view of these open questions, a natural relaxation of EF1 called \emph{proportionality up to one item (PROP1)} has started to receive deeper interest. PROP1 requires each agent gets utility that is at least her proportionality guarantee if she loses her biggest chore or alternatively obtain the biggest good allocated to some other agent. % When the items are goods, any Pareto improvement over a

	Interestingly, even the existence and complexity of PROP1 and Pareto optimal allocation has been an open problem when the utilities are mixed or even negative (see e.g. {a recent survey by} \citet{FrSh19a}). We study this central problem. In previous work, the existence of PROP1 and PO allocations has been established only in the context of goods (positive utilities) and very recently for the case of chores.

	For the case of goods, \citet{CFS17a} posed the complexity of computing a PROP1 and PO allocation as an open problem. They had proved that a PROP1 and PO outcomes always exists even for a public decision making setting that is more general than allocation of indivisible goods. 
	\citet{BaKr19a} presented a {strongly} polynomial-time algorithm that always finds a PROP1 and PO allocation for positive utilities. \citet{BrSa19a} proved that there exists a strongly polynomial-time algorithm for chores that always finds a weighted PROP1 and PO allocation if the number of agents or items is fixed. 
	For mixed utilities, \citet{ACI+18} presented a strongly polynomial-time algorithm to compute a PROP1 and PO allocation when the number of agents is two. They also present a strongly polynomial-time algorithm to compute an EF1 allocation for any number of agents if the preference relation satisfies double monotonicity.

	\paragraph{Contribution}
	We show that even for the case of mixed utilities and any number of agents, an fPO (property stronger than PO) and PROP1 allocation always exists. In particular, we design a strongly polynomial-time algorithm that achieves fPO and PROP1 even if the number of agents or items is not fixed, the utilities are mixed and the agents have asymmetric weights. We obtain as corollaries several recent results that have been proved for only goods or only chores or for weaker requirements. 
	
	\paragraph{Method} {Our results are based on the rounding argument: we first compute a proportional fPO allocation with divisible items and then round it in a clever way that preserves fPO property and ensures PROP1. The known results for goods~\cite{BaKr19a} and for chores~\cite{BrSa19a} use the same methodology; however, they heavily rely on the concept of competitive equilibrium with equal incomes (CEEI) both for computing the initial divisible allocation and for the rounding part. Currently, no algorithms are known for CEEI in economies with mixture of goods and chores, which makes the case of mixed items special and existing approaches inapplicable. We circumvent this difficulty by constructing a rounding procedure that does not rely on equilibrium prices and is applicable to any fPO proportional allocation of divisible items with acyclic consumption graph. To compute such an input allocation we start from the equal division and then find a Pareto-dominating allocation by conducting sequential cyclic trades, the old economic insight recently embodied as an algorithm by~\citet{Sase19a}.
}
	
%\smallskip
	%In many respects our result can be viewed as `tight'.

	%  	\paragraph{Extensions} 	\fed{This paragraph essentially repeats the one from concluding section. Do we need it here? It is unusual to discuss extensions twice in a short paper :-)...}Our existence and algorithmic result concerns PO and PROP1 as the central concepts. The result cannot be improved in several directions.
	% If PO is replaced by welfare maximization, the problem becomes NP-hard. If PROP1 is replaced by EF1, then even the existence of an allocation satisfying both EF1 and PO is an open problem. Finally, if PROP1 is replaced by PROPX (a property stronger than PROP1), then even the existence of a PROPX allocation is not guaranteed.

	%The problem of computing and even the existence of a PO and PROP1 allocation for mixed utilities was mentioned as an open problem in a recent survey by \citet{FrSh19a}.

		\section{Preliminaries}
	
		We consider the allocation of $m$ items in set $O$ to $n$ agents in set $N$. Each agent $i\in N$ has a weight {$b_i>0$} where $\sum_{i\in N}b_i=1$.

	Each item is allocated fully. If each item is allocated to exactly one agent, we call the allocation\emph{ integral}. We will generally denote a fractional allocation by $x=(x_1,\ldots, x_n)$ where $x_i$ is the allocation of agent $i$ and $x_{i,o}$ is the fraction of item $o$ given to agent $i$. We will typically denote an integral allocation by $\pi$ where $\pi_i$ denotes the allocated set of items of agent $i$.

	{By $u_i(o)$ we denote the agent $i$'s utility of receiving the whole item~$o$. The utilities $u_i(o)$ may have mixed signs: an item $o$ can be a chore for some $i$ ($u_i(o)<0$), a good for another agent $j$ ($u_j(o)>0$), and a neutral item for some agent $k$ ($u_k(o)=0$). Agents have additive utilities over allocations: $u_i(x_i)=\sum_{o\in O} u_i(o)x_{i,o}$ and similarly  $u_i(\pi_i)=\sum_{o\in \pi_i} u_i(o)$ in case of integral allocation.}

	{An allocation $y$  Pareto improves an allocation $x$ if $u_i(y_i)\geq u_i(x_i)$ for all $i\in N$ and for some $i$ the inequality is strict.}
	We will call an integral allocation \emph{Pareto optimal (PO)}, if {no integral allocation improves it.}	
	An allocation that {cannot be improved by any fractional allocation} is called \emph{fPO}. {Clearly, an fPO integral allocation is PO as well.}

	An allocation {$x$} is \emph{weighted PROP}  if for each agent $i\in N$, {$u_i(x_i)\geq u_{i}(O)b_i$.}
	An integral allocation $\pi$ is \emph{weighted PROP1} if for each agent $i\in N$, 
	\begin{itemize} 
	\item[$\bullet$] $u_i(\pi_i)\geq u_{i}(O)b_i$; or
	\item[$\bullet$] $u_i(\pi_i)+u_i(o)\geq u_{i}(O)b_i$ for some $o\in O\setminus \pi_i$; or 
	\item[$\bullet$] $u_i(\pi_i)-u_i(o)\geq u_{i}(O)b_i$ for some $o\in \pi_i$. 
	\end{itemize}

	In the literature, PROP1 was studied with respect to goods by \citet{CFS17a}. It has recently been considered for mixed utilities~\citep{ACI+18}.

	When $b_i=1/n$ for all $i\in N$, weighted PROP1 is equivalent to PROP1.

	\medskip

	For any fractional or integral allocation $x$, the corresponding {\emph{consumption graph} $G_x$} is a bipartite graph with vertices $(N\cup O)$ and the edge set $E=\{\{i,o\}\mid x_{i,o}>0\}.$ If an agent $i$ shares an item with an agent $j$ we call $j$, agent $i$'s \emph{neighbor.} 
	
		\section{Algorithm}

		We show that a PO and weighted PROP1 always exists and it can be computed in strongly-polynomial time even for mixed utilities. One possible algorithmic approach for achieving such allocations is to start from a PROP1 allocation and find a Pareto optimal Pareto improving allocation. However, finding an integral allocation that is a Pareto improvement over another integral allocation is generally a computationally hard problem (see e.g. \citet{ABL+19a} and \citet{KBKZ09}) even for the case of goods. Also, a Pareto improvement over a PROP1 allocation may not even satisfy PROP1.
	%The issue was alluded to by \citet{CFS17a}. 
	We provide explicit examples (Examples~\ref{example:goods} and \ref{example:chores}) for this phenomenon.

		\vspace{2em}
	
	In view of the challenges encountered in finding Pareto improvements while maintaining PROP1, we take another route that has been popularized recently (see e.g. \citet{BaKr19a} and \cite{Sase19a}): deal with fractional allocations that are Pareto improving and convert them to suitable integral allocations. 
	
	Our idea is to start with a fractional proportional allocation $x^{\text{prop}}$ and then find a fractional  Pareto optimal allocation $x$ that Pareto improves $x^{\text{prop}}$. 
	We ensure that the  consumption graph of $x$ is acyclic. The acyclicity of the consumption graph is critically used to carefully round $x$ into an integral allocation. The algorithm is described as Algorithm~\ref{algo:poprop1}. The way the rounding is done is illustrated in Example~\ref{example:rounding}. {By acyclicity, the consumption graph of $x$ is a collection of trees. The rounding algorithm} picks an agent $i$ who shares some items with other agents, and rounds all her fractions to her advantage: give to $i$ any good $a$ he shares, and give any chore $b$ to someone with whom he was sharing. In the subtree starting at an agent $j$ who was sharing $a$ or $b$ with $i$, break all the other partial shares of $j$ to her advantage. And so on. The acyclicity of the tree guarantees that this algorithm terminates and returns a {PROP1} allocation. 
	%For definiteness, we specify that at any step in which the algorithm has multiple choices for picking an agent, it opts for the agent with the lowest index.

			\begin{algorithm}[h!]
				\DontPrintSemicolon
				\KwIn{An instance $I = (N,O,u,b)$}
				\KwOut{Integral allocation {$x^*$}}
				Start with a proportional allocation $x^{\text{prop}}$ that gives a share {$b_i$} of each item to each {agent~$i$}.\;
				Find an fPO fractional allocation $x$ that Pareto dominates $x^{\text{prop}}$ and has {an acyclic  consumption graph $G_{x}$   	(computed via the algorithm from Lemma~2.5 in~(\citet{Sase19a})).}   \label{step:acyclic}\;
				Round the fractional allocation $x$ into an integral allocation $x^*$ as follows:\;
				If some $j$ shares an item $o$ for which $u_j(o)=0$, we give it fully to an agent $i$ who shares $o$ in $x$. {Update $x$ so that $x_{i,o}=1$ and $x_{j,o}=0$ for all $j\neq i$.} \label{step:zero_items}\;
			$Q\leftarrow\emptyset$, an empty FIFO (First-In-First-Out) queue of agents\;
					\While{there is an agent $i$ sharing  {at least one item $o$} with others \label{step:while_begin}}{ 						
					Add $i$ to $Q$ \label{step:addinQ} \;
					\While{$Q$ is non-empty}{
						Take the first agent $j$ out of $Q$\;
						Add all the neighbors of $j$ to the end of $Q$ \label{step:queue_add}\;
						\For{each $o$ shared by $j$}{
						\If{$u_j(o)>0$}{give $o$ fully to $j$}
						\ElseIf{$u_j(o)<0$}{give $o$ to a
							%\footnote{{This convention is by no means necessary. For example, one can allocate $o$ to a highest-index neighbor or pick a neighbor randomly.}}
							 neighbor with whom $o$ is shared \label{step:else}}
						Update $x$ \label{step:while_end}\;
				    }	 	
				    }
			    }
		 \Return $x^*=x$ \;
		 	 {\tcc{				 For definiteness, we use the following tie-braking conventions.   Zero item $o$ is given to the lowest-index $i$ who shares $o$ on step~\ref{step:zero_items}. 
		 		While-cycle~\ref{step:while_begin} takes agent $i$ who shares \emph{exactly} one item  (such $i$ exists by acyclicity of $G_x$ as long as there is at least one shared item); if there are multiple such $i$,  the lowest-index agent is chosen.  On step~\ref{step:queue_add}, lowest-index agents enter the queue first.  A chore $o$ is given to a lowest-index neighbour with whom it is shared (step~\ref{step:else}).}}
				% \begin{enumerate}
		% 		\item Take any shared good $a$  and give it arbitrarily to one of the agents say $i$ who shares it. To any other $j$ sharing $a$ initially, fully give all the other goods he is initially sharing (by acyclicity of $G_{x'}$, these goods are not with $i$.). We repeat this step along the tree.
		% % This meets Prop 1 for j; and for any k who was sharing a good with j, we give similarly all the other goods he was sharing and so on. Because we move along a tree, assigning a good is never ambiguous, and Prop1 is guranateed becsue every agent at most loses one shared good and keeps all the others he was eating.
		% 		\item Take any chore $b$ and give it arbitrarily to some $i$ sharing it. Then we relieve $i$ of any other chore he was sharing (not of those he was eating alone), and give each of those to one of the corresponding sharing agents. We repeat this step along the tree.
		% 		\item For mixed items, by efficiency each item is either a good for all who eat it, or a chore for all who eat it. Starting for a shared chore $b$, give it to some agent $i$, then round up all the links from $i$ to other shared items $o$ in favour of $i$, give him $o$ if a good, drop $o$ from him if a chore and give it to some other sharing agent. Starting from a shared good $a$ we give it to some $i$, then all others claiming $a$ get a favourable rounding in the other items they share. We repeat this step along the tree.
		% 		\end{enumerate}

				% \For{$j=1$ to $m$}{
		% 			\If{$u(o_j) \geq 0$}{
		% 				Choose $i^* \in $ \;
		% 			}
		% 			\Else{Choose \;}
		% 			Give  \;
		% 		}
		% 		\Return $\pi$ \;
				\caption{Algorithm to find a weighted PROP1 and PO allocation}
				\label{algo:poprop1}
			\end{algorithm}

		\begin{lemma}\label{lm:1}
			Algorithm~\ref{algo:poprop1} returns an integral allocation that is weighted PROP1 in time $O(n^2m^2(n + m))$.
			\end{lemma}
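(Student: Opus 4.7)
My plan is to prove termination, weighted PROP1, and the time bound separately, devoting most of the effort to PROP1. Two facts about the allocation $x$ produced at Step~\ref{step:acyclic} are used throughout: since $x$ Pareto-dominates $x^{\text{prop}}$ we have $u_j(x_j)\geq b_j u_j(O)$ for every $j$, and $G_x$ is a forest. A first observation is that Step~\ref{step:zero_items} is utility-neutral: fPO forces every co-sharer of a zero-utility item to also value it at zero (otherwise shifting a small amount between a zero-utility sharer and a non-zero one would Pareto-dominate $x$), so reassigning $o$ to a single sharer changes nobody's utility. After Step~\ref{step:zero_items} every shared item is strictly valued by all its sharers and $G_x$ remains a forest.

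For termination, the outer iteration picks an agent $i$ of degree one in $G_x$ (which exists in any non-empty forest that still contains a shared item) and the inner loop performs a FIFO BFS from $i$ on the dynamic agent graph. When $j$ is dequeued, $j$ keeps all goods and hands off all chores it currently shares, so after $j$'s turn $j$ is isolated from every sharing and subsequent dequeues of $j$ are no-ops. Each outer iteration therefore eliminates one tree of $G_x$ and the algorithm stops in at most $n$ outer iterations.

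The heart of the proof is PROP1. Fix $j\in N$ and let $i$ be the root chosen for the tree $T\ni j$. If $j=i$, then $j$ handles its single shared item favorably and $u_j(x^*_j)\geq u_j(x_j)\geq b_j u_j(O)$, so even plain PROP holds. Otherwise let $k^*_j$ be the first agent whose processing enqueues $j$, and let $o^*_j$ be the item they share; acyclicity of $T$ makes $o^*_j$ unique. The central claim is that $o^*_j$ is the only item shared by $j$ whose integer rounding precedes $j$'s own turn. Consider any other initial agent-neighbor $k\neq k^*_j$ of $j$, sharing item $o_{jk}$ with $j$. If $o_{jk}=o^*_j$, then $k$ is a sibling of $j$ under $o^*_j$: the moment $k^*_j$ rounds $o^*_j$, the link between $k$ and $j$ disappears, and $k$'s later processing never touches an item currently shared by $j$. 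If $o_{jk}\neq o^*_j$, then by acyclicity $k$ is a descendant of $j$ in $T$ rooted at $i$; the unique agent-neighbor of $k$ at bipartite distance less than $j$'s from $i$ is then $j$ itself, so FIFO forces $k$ to enter the queue only when $j$ is processed. Consequently every item shared by $j$ other than $o^*_j$ is rounded by $j$ itself in the favorable direction (goods kept, chores passed), and $u_j(x^*_j)-u_j(x_j)$ equals a non-negative term plus the single contribution of $o^*_j$, whose magnitude is at most $|u_j(o^*_j)|$. A short sign analysis finishes the job: if $j$ ends without $o^*_j$ and $u_j(o^*_j)>0$, then $u_j(x^*_j)+u_j(o^*_j)\geq u_j(x_j)\geq b_j u_j(O)$; if $j$ ends with $o^*_j$ and $u_j(o^*_j)<0$, then $u_j(x^*_j)-u_j(o^*_j)\geq u_j(x_j)\geq b_j u_j(O)$; the remaining two subcases give $u_j(x^*_j)\geq u_j(x_j)$, i.e.\ plain PROP.

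For the time bound, Step~\ref{step:acyclic} invokes the procedure of Lemma~2.5 in~\citet{Sase19a} and dominates the running time; the rounding loops touch each agent a bounded number of times with polynomial bookkeeping per visit and contribute only lower-order terms, yielding the claimed $O(n^2 m^2(n+m))$ bound. The main obstacle I expect is the uniqueness of the predecessor item $o^*_j$: it relies on combining the forest structure of $G_x$ with the FIFO discipline and the tie-breaking rules, after which PROP1 reduces to the sign bookkeeping sketched above.
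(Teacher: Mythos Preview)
Your proposal is correct and follows essentially the same approach as the paper: both proofs reduce weighted PROP1 to showing that each agent $j$ can have at most one shared item rounded against her before she becomes active, and both derive this from acyclicity of $G_x$, with the time bound dominated by the Pareto-improvement step of \citet{Sase19a}. The only noteworthy difference is stylistic: the paper argues uniqueness of the ``predecessor'' abstractly (two predecessors would yield a cycle avoiding $j$), whereas you identify $o^*_j$ concretely as $j$'s parent item in the BFS tree and use FIFO layering; your fPO observation that every co-sharer of a zero item must also value it at zero is a nice explicit justification of Step~\ref{step:zero_items} that the paper leaves implicit.
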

			\begin{proof}
			{Let us show that the returned allocation $x^*$ satisfies weighted PROP1. Since the divisible allocation $x$ from the Step~\ref{step:acyclic} is an fPO allocation that Pareto dominates $x^{\text{prop}}$, it satisfies weighted PROP. This property is preserved when shared zero items are allocated entirely to one of the owners at Step~\ref{step:zero_items}. }
				
				{While-cycles (Steps~\ref{step:while_begin} to~\ref{step:while_end}) implement breadth-first exploration of the consumption graph combined with reallocation of shared {items} (the inner cycle explores agents in each connected component and outer cycle allows to switch between components).
	The algorithm touches no item  consumed fully by one agent; only shared items are reallocated entirely to one of the original partial-owners. Thus to ensure that $x^*$ is weighted PROP1,
				it is enough to show that any agent $j$ can lose at most one partially consumed good (an {item} $o$ such that $u_j(o)> 0$, $x_{j,o}\in(0,1)$) or get an increased share
of  at most one partially-consumed chore ($u_j(o)< 0$, $x_{j,o}\in(0,1)$) and these two cases are mutually exclusive.}
				
				{Let us call an agent $j$ picked by the internal while-cycle \emph{active}. All the items shared by the active agent are allocated by the for-cycle in her favor ($j$ receives all her shared goods and gets rid of all shared chores); later on her allocation does not change. Therefore, the only possibility for $j$'s utility at $x^*$ to be lower compared to $x$ is if she loses her goods or gets more chores \emph{before} becoming active. 
				This means that at an earlier stage of the algorithm there was an active agent $i$ that shared an {item} $o$ with $j$ and either this {item} $o$ was a good both for $i$ and $j$ (in this case, the algorithm allocated it to $i$ taking away one of $j$'s goods) or $o$ was a chore for both $i$ and $j$ and was allocated to $j$ since she was the lowest-index neighbor of $i$. We call such $i$ a \emph{predecessor} of $j$.}
					
				{By acyclicity of the consumption graph, $j$ can have at most one predecessor. Indeed, the if $j$ has two predecessors $i$ and $i'$, then there is a path in the original consumption graph connecting $i$ and $i'$ (inside the outer while-cycle, only neighbors of previously active agents can enter the queue $Q$), but this path cannot pass through $j$ (since $j$ was not active yet); thus the original graph contains a cycle.}
				
				{			The outer while-cycle runs until there are no shared items left; since each iteration of this cycle reallocates at least one shared items, the algorithm terminates in finite number of steps and outputs an integral allocation $x^*$. Since no agent has more than one predecessor, we conclude that  $x^*$ satisfies weighted PROP1. 
				 }

%				%We show that as $x$ is updated, it still satisfies weighted PROP1.
%				{There are only two reasons why an agent $j$'s utility may decrease. Either a good $o$ partially consumed by {$j$} at $x$ is allocated to another agent at $x^*$ or $j$ gets increased share of a chore $o$. Since all the items within the for-cycle are rounded in favor of $j$, the only possibility is that $o$ is an item shared by $j$ and its predecessor $i$ in the breadth-first search.  Therefore, there is at most one such item $o$ and the resulting allocation $x^*$ satisfies PROP1.}
%				%In both cases the agent $j$ still satisfies weighted PROP1 up to that particular item.
%				%Also note that whenever $j$'s utility decreases, it is helped with respect to remaining items that he shares with other agents. After this help, $j$'s allocation is not changed further so $j$ gets her weighted PROP1 guarantee in final allocation.

		{	It remains to estimate the time-complexity.	The allocation $x$ is computed in time $O(n^2m^2(n + m))$ according to the algorithm of 
	\citet{Sase19a}. After that, the consumption graph $G_x$ is computed in time {$O(n\cdot m)$}. {It has $n+m$ vertices $V$,  and at most $n+m-1$ edges $E$ since, by acyclicity, $G_x$ is a collection of trees.  ``Root agents'' $i$ from the outer while-cycle, one per each subtree of $G_x$, can be found in $O(|V|+|E|)=O(n+m)$ by the depth-first search. For each such agent $i$, the internal while-cycle represents breadth-first exploration of the tree $T_i=(V_i,E_i)$ rooted at $i$ combined with reallocation of each discovered shared item (takes a constant time). For each tree, breadth-first search takes time $O(|V_i|+|E_i|)$ and hence the whole outer while-cycle can be implemented in $O(|V|+|E|)=O(n+m)$. Thus the overall time-complexity is determined by the Pareto improvement phase of the algorithm.}
}
%			
%
%				%With each change in the allocation, the total number of sharings decreases. So the allocation returned by Algorithm~\ref{algo:poprop1} is integral. 
%			
%				We next argue that the returned allocation satisfies weighted PROP1. Since $x$ initially Pareto dominates $x^{\text{prop}}$, it satisfies weighted PROP.  
%	%We show that as $x$ is updated, it still satisfies weighted PROP1.
%		{There are only two reasons why an agent $j$'s utility may decrease. Either a good $o$ partially consumed by {$j$} at $x$ is allocated to another agent at $x^*$ or $j$ gets increased share of a chore $o$. Since all the items within the for-cycle are rounded in favor of $j$, the only possibility is that $o$ is an item shared by $j$ and its predecessor $i$ in the breadth-first search.  Therefore, there is at most one such item $o$ and the resulting allocation $x^*$ satisfies PROP1.}
%		%In both cases the agent $j$ still satisfies weighted PROP1 up to that particular item.
%		%Also note that whenever $j$'s utility decreases, it is helped with respect to remaining items that he shares with other agents. After this help, $j$'s allocation is not changed further so $j$ gets her weighted PROP1 guarantee in final allocation.
		\end{proof}
	\begin{lemma}\label{lm:2}
			Algorithm~\ref{algo:poprop1} returns an integral allocation that is fPO.
			\end{lemma}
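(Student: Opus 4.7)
My plan is to apply the welfare-theorem characterization of fractional Pareto optimality for additive utilities and to verify that the rounding procedure keeps the allocation inside this class.

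First, I would recall that for additive utilities a fractional allocation $y$ is fPO if and only if there exist strictly positive welfare weights $\lambda_1,\dots,\lambda_n>0$ such that $y$ maximizes $\sum_{i\in N}\lambda_i u_i(\cdot)$ over the polytope of feasible fractional allocations; this is a standard consequence of LP duality and holds also with mixed-signed utilities. Since the divisible allocation $x$ produced in Step~\ref{step:acyclic} is fPO, fix such weights $(\lambda_i)_{i\in N}$. Because the objective $\sum_i\lambda_i u_i(\cdot)$ is linear in the allocation, its maximizers are exactly those allocations which, for every item $o$, assign $o$ only to agents attaining
\[
M(o)\;:=\;\max_{j\in N}\lambda_j u_j(o).
\]
In particular, for every item $o$ and every agent $k$ with $x_{k,o}>0$, one has $\lambda_k u_k(o)=M(o)$.

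Second, I would show that every reassignment carried out by Algorithm~\ref{algo:poprop1}---both in Step~\ref{step:zero_items} and inside the while-loop---hands an item $o$ entirely to an agent that currently shares $o$. A short induction on the sequence of reassignments then gives that the recipient lies in the original support $\{k : x_{k,o}>0\}$ (since the set of current sharers only shrinks) and therefore attains $M(o)$. In Step~\ref{step:zero_items}, the recipient is explicitly one of the current sharers of $o$; moreover, the existence of a sharer $j$ with $u_j(o)=0$ forces $M(o)=0$, hence $u_k(o)=0$ for every sharer $k$, so the max condition is preserved trivially. In the inner while-loop, a shared good (with $u_j(o)>0$ for the active agent $j$) is awarded to $j$ herself, while a shared chore (with $u_j(o)<0$) is awarded to one of $j$'s neighbors who still shares $o$; in both cases the recipient belongs to the current (hence original) sharing set.

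Combining the two points, when the algorithm terminates every item $o$ has a unique owner $i^\star(o)$ satisfying $\lambda_{i^\star(o)} u_{i^\star(o)}(o)=M(o)$. Hence the integral allocation $x^*$ is itself a maximizer of $\sum_i\lambda_i u_i(\cdot)$ over all feasible fractional allocations, with the same strictly positive weights $(\lambda_i)$, and is therefore fPO. The only subtle step I anticipate is Step~\ref{step:zero_items}: a priori, one might worry that moving a zero item to a co-sharer with different utility violates the max condition, but the observation that fPO of $x$ forces \emph{every} sharer of a zero item to have utility exactly zero is precisely what keeps the rounding inside the class of $\sum\lambda_i u_i$-maximizers.
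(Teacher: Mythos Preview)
Your proposal is correct and follows essentially the same route as the paper's proof: both invoke the weighted-welfare characterization of fPO to fix strictly positive weights $(\lambda_i)$ for which $x$ is a maximizer, observe that the rounding only ever reassigns an item to an agent already in its support (equivalently, $G_{x^*}$ is a subgraph of $G_x$), and conclude that $x^*$ maximizes the same weighted welfare and is therefore fPO. Your separate treatment of Step~\ref{step:zero_items} is more explicit than the paper's but not needed beyond the subgraph observation, since any original sharer of $o$ already attains $M(o)$.
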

			\begin{proof}
				Allocation $x$ is fPO in Step \ref{step:acyclic}. 
				Since $x$ is fPO, due to known results by \citet{Vari76a}, it maximizes {weighted welfare $\sum_i \lambda_i  u_i(x_i)$ for some strictly positive weighting $\lambda=(\lambda_i)_{i\in N}$} of the agents {(see also Lemma~2.3 in \cite{Sase19a} for a particular case of mixed items)}. After that we modify $x$ to $x^*$ so that $G_{x^*}$ is a subgraph of $G_x$. {Hence each item $o$ is still consumed by agents with highest weighted utility $\lambda_i \cdot u_i(o)$.} Therefore $x^*$ maximizes welfare for the same weighting of the agents as $x$. Thus $x^*$ is fPO as well. 
				\end{proof}
\begin{remark}[Flexibility in the algorithm]\label{rem_flexibility}
	{	Proofs of weighted PROP1 and fPO properties from Lemma~\ref{lm:1} and~\ref{lm:2} do not rely on the details of the exploration procedure. For example, instead of the breadth-first search one can use the depth-first (i.e., the First-In-Last-Out queue $Q$) or start exploring each connected component of the consumption graph from an arbitrary agent (not necessary the one sharing exactly one item). The only condition that is critical is that in each connected component, the set of explored agents (those that were active at some point) must be connected during the exploration process. Together with acyclicity assumption this allows to ensure that no agent has more than one predecessor and thus use the same argument as in the proof  of Lemma~\ref{lm:1} to deduce PROP1. Also, the choice of the lowest-index agents at Steps~\ref{step:zero_items}, {\ref{step:while_begin}}, \ref{step:queue_add}, and~\ref{step:else} is made for tie-breaking purposes only; instead one can pick such an agent randomly or use any other heuristic.}
	
	{This flexibility leads to a family of algorithms that may output different fPO PROP1 allocations.  An interesting question which we leave open is how to pick the best one among them?}
	\end{remark}
			
	Based on the two lemmas we get the following. 

	\begin{theorem}\label{th:main}
		For mixed utilities, there always exists an integral allocation that satisfies weighted PROP1 and fPO. Furthermore, there exists a strongly polynomial-time algorithm in $n+m$ that returns such an allocation.
		\end{theorem}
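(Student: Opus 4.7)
My plan is to derive Theorem~\ref{th:main} as a direct consequence of Lemmas~\ref{lm:1} and~\ref{lm:2}, both established for the concrete procedure given in Algorithm~\ref{algo:poprop1}. Since existence and algorithmic construction are proved simultaneously by exhibiting an algorithm that always halts and returns an allocation with the desired properties, no separate existence argument is needed.

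First, I would invoke Algorithm~\ref{algo:poprop1} on an arbitrary instance $I=(N,O,u,b)$. Its first substantive step produces a fractional fPO allocation $x$ that Pareto dominates the equal-share allocation $x^{\text{prop}}$ and whose consumption graph $G_x$ is acyclic; this is guaranteed by the cycle-elimination procedure of \citet{Sase19a} cited in Step~\ref{step:acyclic}. The remaining steps of the algorithm then perform the rounding using breadth-first exploration of the forest $G_x$, which, as shown in the preceding lemmas, is where the two target properties are simultaneously secured.

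Next, I would apply \lemref{lm:2} to conclude that the output $x^*$ is fPO: the rounding only removes edges from $G_x$, so $x^*$ keeps consuming each item via an agent that maximises the corresponding weighted utility $\lambda_i u_i(o)$ for the same weighting $\lambda$ that certifies fPO of $x$ through \citet{Vari76a}'s characterisation. Simultaneously, \lemref{lm:1} gives that $x^*$ is weighted PROP1 by the predecessor argument: the acyclicity of $G_x$ together with the connectivity of the explored set in each component ensures every agent has at most one predecessor, so any agent who is worse off at $x^*$ than at $x$ loses at most one partial good or gains the share of at most one partial chore, while $x$ itself is weighted PROP because it Pareto dominates~$x^{\text{prop}}$.

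Finally, for the complexity claim, I would quote the $O(n^2m^2(n+m))$ bound from \lemref{lm:1}, whose dominant term is the Pareto improvement step of \citet{Sase19a}; the rounding phase runs in $O(n+m)$ via standard depth-/breadth-first search on the forest $G_x$. This is strongly polynomial in $n+m$, completing the theorem. The only conceptual obstacle is already absorbed into the two lemmas, namely the design of a rounding scheme that respects fPO and PROP1 without relying on competitive equilibrium prices, which were the standard tool in the goods-only and chores-only predecessors; here nothing further is required beyond assembling the two lemmas.
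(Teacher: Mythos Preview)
Your proposal is correct and follows exactly the paper's own approach: the theorem is obtained immediately by combining \lemref{lm:1} (weighted PROP1 and the $O(n^2m^2(n+m))$ running time) with \lemref{lm:2} (fPO), and the paper's proof consists of nothing more than the sentence ``Based on the two lemmas we get the following.'' Your additional recap of why each lemma holds is accurate but not required for the proof of the theorem itself.
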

	
		We obtain several recent results as corollaries of Theorem~\ref{th:main}. 
	
		\begin{corollary}[\citet{ACI+18}]%Theorem 3, 
			 For two agents and mixed utilities, a Pareto-optimal and PROP1 allocation exists and can be computed in strongly polynomial time.
			\end{corollary}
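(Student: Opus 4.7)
The plan is to derive this as a direct specialization of Theorem~\ref{th:main}. I would first observe that the two-agent setting corresponds to taking $n=2$ in the general framework, and the unweighted variant of PROP1 is obtained by setting $b_1=b_2=1/2$. Since the preliminaries already note that when $b_i=1/n$ for all $i\in N$, weighted PROP1 coincides with ordinary PROP1, no separate reduction is needed on the fairness side.

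Next, I would invoke the fact that fPO implies PO (also noted explicitly in the preliminaries: an fPO integral allocation is PO). Thus the allocation produced by Algorithm~\ref{algo:poprop1} on an instance with $n=2$ and uniform weights is both PO and PROP1, which yields the existence claim of the corollary.

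For the complexity claim, I would simply substitute $n=2$ into the strongly polynomial bound $O(n^2m^2(n+m))$ given in Lemma~\ref{lm:1}, which remains strongly polynomial (in fact it reduces to $O(m^3)$). Since Algorithm~\ref{algo:poprop1} is strongly polynomial for arbitrary $n$, it is in particular strongly polynomial for $n=2$.

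No step here presents a genuine obstacle: the corollary is essentially a restatement of Theorem~\ref{th:main} under the restrictions $n=2$ and uniform weights, and the only mild care needed is to point out the two definitional collapses (weighted PROP1 $\to$ PROP1 under equal weights, and fPO $\to$ PO). The original contribution of \citet{ACI+18} is thus subsumed as a special case of the more general result proved above.
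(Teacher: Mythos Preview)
Your proposal is correct and matches the paper's approach exactly: the paper simply lists this corollary immediately after Theorem~\ref{th:main} without further argument, treating it as an immediate specialization. Your write-up just makes explicit the two definitional collapses (equal weights give PROP1, fPO implies PO) that the paper leaves implicit.
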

		
			\begin{corollary}[\citet{BaKr19a}]
				 For positive utilities, {an fPO and PROP1 allocation} can be computed in {strongly} polynomial time.
				\end{corollary}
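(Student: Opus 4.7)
The plan is to obtain this statement as a direct specialization of Theorem~\ref{th:main}. Positive utilities (every $u_i(o)\geq 0$) constitute a strict subclass of mixed utilities, so the algorithm and both existence/complexity guarantees of Theorem~\ref{th:main} apply verbatim. Next, one sets $b_i=1/n$ for every $i\in N$; as explicitly noted in the Preliminaries, with equal weights the weighted PROP1 criterion reduces to ordinary PROP1. The fPO conclusion is already weight-independent. Thus the allocation $x^*$ produced by Algorithm~\ref{algo:poprop1} on the instance $(N,O,u,b)$ with $b_i=1/n$ is both fPO and PROP1, and is computed in time $O(n^2m^2(n+m))$, which is strongly polynomial in $n+m$.

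To make the reduction self-contained I would briefly verify that nothing in the pipeline breaks when utilities are nonnegative. The starting allocation $x^{\text{prop}}$ assigns a $1/n$-share of every item to every agent and is trivially proportional. The Pareto-improvement phase of \citet{Sase19a} produces an fPO fractional allocation $x$ with an acyclic consumption graph, and this subroutine does not require mixed signs. In the rounding phase, Step~\ref{step:zero_items} is vacuous for strictly positive utilities, and the inner for-loop only exercises the ``$u_j(o)>0$'' branch, so every active agent $j$ retains all of her shared items in full. The predecessor argument from the proof of Lemma~\ref{lm:1} then gives PROP1 in the form $u_i(\pi_i)+u_i(o)\geq u_i(O)/n$ for at most one missing good $o$, while Lemma~\ref{lm:2} yields fPO since $G_{x^*}\subseteq G_x$ preserves the common supporting welfare weights $\lambda_i$.

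I do not expect any genuine obstacle: the only work is to observe that the positive-utility regime is a proper sub-regime of mixed utilities and that equal weights trivialize the distinction between weighted PROP1 and PROP1. One could alternatively streamline the presentation by remarking that in the goods-only case the ``chore'' branch (Step~\ref{step:else}) is never invoked, so the rounding reduces to the intuitive rule ``hand every shared good to a designated recipient in each tree component of $G_x$,'' which matches the qualitative description in \citet{BaKr19a}.
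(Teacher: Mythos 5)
Your proposal is correct and matches the paper's own treatment: the corollary is stated as a direct specialization of Theorem~\ref{th:main} to positive utilities with equal weights $b_i=1/n$, under which weighted PROP1 coincides with PROP1, and the paper offers no further argument. Your additional sanity checks on the pipeline are harmless but not needed.
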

			
				\begin{corollary}[\citet{BrSa19a}]
					 For negative utilities, a Pareto-optimal and weighted PROP1 {allocation} can be computed in strongly polynomial time if the number of agents or items is fixed.
					\end{corollary}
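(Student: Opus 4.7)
The plan is to deduce this corollary as an immediate specialization of Theorem~\ref{th:main}. Negative utilities (all chores) are a special case of the mixed-utility setting considered in the theorem, obtained by restricting to instances in which $u_i(o)\leq 0$ for every agent $i$ and item $o$. Similarly, since every fPO allocation is PO, the fPO guarantee of Theorem~\ref{th:main} implies PO. Hence the existence of a PO and weighted PROP1 allocation for negative utilities follows directly.

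For the algorithmic claim, the key observation is that Theorem~\ref{th:main} actually delivers more than what the corollary demands: it guarantees a strongly polynomial-time algorithm in $n+m$ (namely Algorithm~\ref{algo:poprop1}, whose running time is bounded by $O(n^2m^2(n+m))$ according to Lemma~\ref{lm:1}). A strongly polynomial bound in $n+m$ with no restriction on $n$ or $m$ is strictly stronger than a strongly polynomial bound under the additional assumption that either $n$ or $m$ is fixed. Therefore, running Algorithm~\ref{algo:poprop1} on the given instance suffices, regardless of whether $n$ or $m$ happens to be constant.

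The only step worth double-checking is that none of the intermediate objects used by Algorithm~\ref{algo:poprop1} degenerate on purely negative instances. The equal-division starting point $x^{\text{prop}}$ is well defined regardless of the sign of utilities; the sequential cyclic trade procedure of~\citet{Sase19a} invoked at Step~\ref{step:acyclic} is stated for mixed utilities and hence applies to the all-chores case; and the rounding phase, as shown in Lemma~\ref{lm:1}, handles shared items of either sign using the same acyclicity argument. No obstacle arises, so the corollary follows with no additional work.
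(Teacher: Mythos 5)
Your proposal is correct and matches the paper's (implicit) argument exactly: the corollary is stated as an immediate specialization of Theorem~\ref{th:main}, obtained by restricting to all-negative utilities, using that fPO implies PO, and observing that a strongly polynomial bound in $n+m$ with no restriction on $n$ or $m$ subsumes the fixed-parameter claim. Your extra sanity check that the pipeline does not degenerate on all-chores instances is a reasonable, if unnecessary, addition.
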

		
				\begin{example}[Illustration of how our algorithm rounds a fractional allocation into an integral allocation]
			%\caption{Illustration of how our algorithm rounds a fractional allocation into an integral allocation.}
			\label{example:rounding}

			Consider a fractional allocation $x$ represented in Table~\ref{table:acyc1}.
			{The allocation has the consumption graph  with two connected components depicted in Figure~\ref{fig:graph}. Since the graph is acyclic, it can be viewed as a pair of trees rooted with agent $2$ and $3$.}
			
 			{The fractional allocation is rounded as in Table~\ref{table:acyc2}. First the algorithm picks agent $2$ and gives him both goods $a$ and $d$ that he consumes.  Agent $1$ keeps his good $b$ and passes chore $c$ to agent $5$ who also consumes good $h$. In the second connected component, the algorithm picks agent $3$ and gives him good $f$ entirely.  Agent $4$ gets nothing. }
 			
 			{Note that using the flexibility discussed in Remark~\ref{rem_flexibility}, we can also end up with another rounding. Let us assume that the outer while-cycle picks  agent $1$ first in one connected component and  agent $4$ in the other one. Then agent $1$ gets both goods $a$ and $b$ (previously, the good $a$ was allocated by the algorithm to agent $2$) and passes chore $c$ to agent $5$, who ends up consuming $c$ and $h$. Agent $2$ is left with good $d$ only. In the second component, agent $4$ gets good $f$ (previously, he got nothing) and agent $3$ receives good $g$ and chore $h$.}
 			
 			{By Theorem~\ref{th:main}, both roundings are fPO and PROP1 as long as the original fractional allocation satisfies fPO and weighted PROP.}

			\begin{table}[h!]
				\begin{center}
					\scalebox{1}{
					\setlength{\tabcolsep}{6pt}
					\begin{tabular}{c|cccccccccccccc}
						& $a$ & $b$&$c$&$d$&$e$&$f$&$g$&$h$ \\
						\hline
						$1$ & \squared{+}&\squared{+}&\squared{--}&+&--&+&+&+\\
						$2$ & \squared{+} &-- &--&\squared{+}&--&+&+&+\\
						$3$ &+ &+&--&--&\squared{--}&\squared{+}&\squared{+}&--\\
						$4$ &  --&{+}&--&--&--&\squared{+}&{+}&--\\
						$5$ &-- &-- &\squared{--}&+&--&--&+&\squared{+}
						%$6$ &-- &+ &\squared{--}&+&+&+&+&+\\
						%$3$ & \squared{1} & \squared{1} & \squared{1} & \squared{1} & 0 & 0 & 0 & 0 \\
					\end{tabular}
					}
				\end{center}
				\caption{Table for Example~\ref{example:rounding} indicating the signs of the utilities of agents as well as an allocation {before the rounding algorithm is applied.} A square indicates that the agent {consumes a non-zero amount of  the item.} The allocation has an acyclic consumption graph represented in Figure~\ref{fig:graph}. }
				\label{table:acyc1}
				\end{table}

			\begin{table}[h!]
				\begin{center}
					\scalebox{1}{
					\setlength{\tabcolsep}{6pt}
					\begin{tabular}{c|cccccccccccccc}
						& $a$ & $b$&$c$&$d$&$e$&$f$&$g$&$h$ \\
						\hline
						$1$ & {+}&\squared{+}&{--}&+&--&+&+&+\\
						$2$ & \squared{+} &-- &--&\squared{+}&--&+&+&+\\
						$3$ &+ &+&--&--&\squared{--}&\squared{+}&\squared{+}&--\\
						$4$ &  --&{+}&--&--&--&{+}&{+}&--\\
						$5$ &-- &-- &\squared{--}&+&--&--&+&\squared{+}
						%$6$ &-- &+ &{--}&+&+&+&+&+\\
						%$3$ & \squared{1} & \squared{1} & \squared{1} & \squared{1} & 0 & 0 & 0 & 0 \\
					\end{tabular}
					}
				\end{center}	
					\caption{The output of the rounding algorithm for the allocation from Table~\ref{table:acyc1}. {Note that each item is consumed by exactly one agent since the the allocation is integral.}}
						\label{table:acyc2}
					\end{table}
					\begin{figure}[h!]
									\scalebox{0.5}{
								\centering
							\begin{tikzpicture}[level/.style={sibling distance=40mm/#1},align=center]
							    \tikzstyle{every node}=[circle,draw]

							% \node {0}
		
							% 	}

							    \node {2}
							        child {node {a}	
							        child {node {1}  child {node {b}}
							        	child { node {c} child { node {5} child {node {h}}} }}
							        }
						            child{node {d}};

							 \tikzstyle{level}=[sibling distance=30mm]      
							 \node at (5,-3) {3}
							 child{node{e}}
							 child{node{f} child{node{4}}}
							 child{node{g}};       
	
							\end{tikzpicture}
						}
						\centering
						\small
		
						\normalsize
							\caption{{The acyclic consumption} graph corresponding to the allocation in Table~\ref{table:acyc1}.
							}
							\label{fig:graph}

							\end{figure}

				% \begin{center}
				% 	\scalebox{1}{
				% 	\setlength{\tabcolsep}{6pt}
				% 	\begin{tabular}{c|cccccccccccccc}
				% 		& $a$ & $b$&$c$&$d$&$e$&$f$&$g$&$h$ \\
				% 		\hline
				% 		$1$ & \squared{11}&\squared{1}&\squared{-2}&0&0&0&0&0\\
				% 		$2$ & \squared{20} &-20 &0&\squared{3}&7&0&0&0\\
				% 		$3$ &\squared{} &&&&\squared{}&&&\\
				% 		$4$ &  &\squared{}&&&&\squared{}&\squared{}&\\
				% 		$5$ & & &\squared{}&&&&&&\squared{}\\
				% 		$6$ & & &\squared{}&&&&&&\\
				% 		%$3$ & \squared{1} & \squared{1} & \squared{1} & \squared{1} & 0 & 0 & 0 & 0 \\
				% 	\end{tabular}
				% 	}
				% \end{center}
				%
				%
				%
				%
				% 		\begin{center}
				% 			\scalebox{1}{
				% 			\setlength{\tabcolsep}{6pt}
				% 			\begin{tabular}{c|cccccccccccccc}
				% 				& $a$ & $b$&$c$&$d$&$e$&$f$&$g$&$h$ \\
				% 				\hline
				% 				$1$ & \squared{11}&\squared{1}&-2&0&0&0&0&0\\
				% 				$2$ & {20} &-20 &0&\squared{3}&7&0&0&0\\
				% 				$3$ & &&&&\squared{}&&&\\
				% 				$4$ &  &&&&&\squared{}&\squared{}&\\
				% 				$5$ & & &&&&&&&\squared{}\\
				% 				$6$ & & &&&&&&&\\
				% 				%$3$ & \squared{1} & \squared{1} & \squared{1} & \squared{1} & 0 & 0 & 0 & 0 \\
				% 			\end{tabular}
				% 			}
				% 		\end{center}

						\end{example}
		
			\section{Discussion}
		
	Recently, approaches based on maximin share fairness (a property weaker than proportionality) have been considered for computing fair allocation of indivisible goods {to} \emph{asymmetric} agents~\citep{ACL19a,BNT17a,FHG+19a}. 
	The results in these papers are either for the case of goods or for chores whereas we consider mixed utilities. Our approach uses weighted PROP1 which is a relaxation of the more traditional proportionality guarantee. 

				% \begin{algorithm}[h!]
	% 				\DontPrintSemicolon
	% 				\KwIn{An instance $I = (N,O,u,y)$}
	% 				\KwOut{fPO allocation $x$ that Pareto {improves} allocation $y$ and for which $G(x)$ is acyclic}
	% 			%	$G(x)\leftarrow $ complete bipartite graph\;
	% %$T \leftarrow \emptyset$
	%
	% %\While{consumption graph $G(x)$ has some cycle $C$}{
	% 	Solve the following $LP$
	% 	%with optimum value $\text{opt}_{T}$:
	% \\	$\max\sum_{i\in N} (\sum_{o\in O} u_{i}(o) \cdot x_{i,o}) \text{ s.t. }$
	% %\vskip -0.6cm
	% $$\left\{\begin{array}{lr}%\label{eq_LP_prop2}
	% 					 \sum_{i\in N}\sum_{o\in O}u_{i}(o) \cdot x_{i,o} \geq \sum_{i\in N}\sum_{o\in O}u_{i}(o) \cdot y_{i,o}& \text{for all $i\in N$}\\
	% 						\sum_{i\in N}x_{i,o} =1 & \text{for all $o\in O$}\\
	% 						 x_{i,o}= 0 & \text{for all $(i,o)\in T$}\\
	% 						 x_{i,o}\geq 0 & \text{for all $i\in N$ and $o\in O$}
	% 						\end{array}
	% 						\right.$$
	% 								% \If{there exists some $(i,o)\in C$ such that $\text{opt}_T=\text{opt}_{T\cup \{(i,o)\}}$}{$T\longleftarrow T\cup \{(i,o)\}$}
	%
	% %}
	%
	% 	% \If{there exists some $(i,o)\in N\times O \setminus T$ such that $\text{opt}_{E}=\text{opt}_{E\cup \{(i,o)\}$ }{$E\longleftarrow E\cup \{(i,o)\}$}
	%
	% 	 \Return $x$ \;
	%
	% 				\caption{Algorithm to find a Pareto improvement}
	% 				\label{algo:poimprove}
	% 			\end{algorithm}

	%combinatorial \fed{I don't really know what is combinatorial algorithm. Haris, are you sure that our algorithm deserves this epithet?} 

	Our strongly polynomial-time algorithm relies in Step~\ref{step:acyclic} on an algorithm proposed by \citet{Sase19a}. One may wonder whether 
	there is a conceptually {simpler} self-contained algorithm in Step~\ref{step:acyclic}  {achieving an fPO allocation that is an fPO improvement over $x^{\text{prop}}$ and has an} acyclic consumption graph.  Algorithm~\ref{algo:poimprove} satisfies these requirements. It maximizes the sum of utilities subject to proportionality {so the resultant allocation $x^*$} is fPO and proportional. 
	
				\begin{algorithm}[h!]
					\DontPrintSemicolon
					\KwIn{An instance $I = (N,O,u,y)$}
					\KwOut{fPO allocation $x$ that Pareto {improves} allocation $y$ and for which $G_x$ is acyclic}
					$G_x\leftarrow $ complete bipartite graph\;
	$T \leftarrow \emptyset$

	\While{consumption graph $G_x$ has some cycle $C$}{
		Solve $LP_T$ with optimum value $\text{opt}_{T}$:
	\\	$\max\sum_{i\in N} (\sum_{o\in O} u_{i}(o) \cdot x_{i,o}) \text{ s.t. }$
	% \vskip -0.6cm
	$$\left\{\begin{array}{lr}%\label{eq_LP_prop2}
						 \sum_{i\in N}\sum_{o\in O}u_{i}(o) \cdot x_{i,o} \geq \sum_{i\in N}\sum_{o\in O}u_{i}(o) \cdot y_{i,o}& \text{for all $i\in N$}\\
							\sum_{i\in N}x_{i,o} =1 & \text{for all $o\in O$}\\
							 x_{i,o}= 0 & \text{for all $(i,o)\in T$}\\
							 x_{i,o}\geq 0 & \text{for all $i\in N$ and $o\in O$}
							\end{array}
							\right.$$
									\If{there exists some $(i,o)\in C$ such that $\text{opt}_T=\text{opt}_{T\cup \{(i,o)\}}$}{$T\longleftarrow T\cup \{(i,o)\}$}

	}

		% \If{there exists some $(i,o)\in N\times O \setminus T$ such that $\text{opt}_{E}=\text{opt}_{E\cup \{(i,o)\}$ }{$E\longleftarrow E\cup \{(i,o)\}$}

		 \Return $x=x^*$ \;

					\caption{Algorithm to find a Pareto improvement}
					\label{algo:poimprove}
				\end{algorithm}
	
	{Acyclicity of $G_{x^*}$ is ensured by running the while loop. If an interim fPO allocation $x$ has a cycle $C$ in  $G_x$, there always exists an allocation $x'$ such that all agents get the same utilities and the graph  $G_{x'}$ is a subgraph of $G_x$ but does not contain some  edge $\{i,o\}$  from $C$ (existence of $x'$ follows from a ``cyclic trade'' argument as in the proof of Lemma~2.5 from \citet{Sase19a}). 
	Therefore, we can put the edge $\{i,o\}$ to $T$ without affecting the optimal value of the LP. Thus the consumption graph of the final allocation $x^*$ contains no cycle~$C$.}

	%because if there exists an optimal solution of a linear program, then there exists an optimal basic feasible solution. 
	%The simplex-method returns a basic feasible solution which ensures the corresponding consumption graph is acyclic. 
	Since Algorithm~\ref{algo:poimprove} uses linear programming, it only gives a guarantee of weakly polynomial-time. Note that instead of Algorithm~\ref{algo:poimprove}, one can {simply solve}
	$LP_{T=\emptyset}$ via the simplex algorithm which returns {a basic feasible solution, i.e, the extreme point of the set of solutions. It is easy to see that for such an extreme solution $x^*$, the graph $G_{x^*}$ is  acyclic. Indeed, if $x^*$  contains a cycle $C$, the allocation can be represented as the convex combination of allocations obtained by ``forward trade'' and ``backward trade'' along $C$.} 
	The simplex algorithm works very well in practice but can in  theory take exponential time in the worst case.

	\paragraph{Extending the result}

	We now point out that our result is un-improvable or challenging to improve in several respects. If PROP1 is replaced by the stronger property of EF1, then it is an open problem whether an EF1 and PO allocation exists or not~\citep{ACI+18}. The problem remains open even for the case of chores.

	If PROP1 is strengthened to a concept called proportionality up to the extreme item (PROPX), then the existence of an allocation satisfying the property is not guaranteed for the case of goods~\citep{Moul18}. We provide a self-contained and simpler example (Example~\ref{example:propx}).

	Finally, one may wonder whether our main result can be strengthened by considering maximum welfare rather than Pareto optimality. However, computing an allocation that is utilitarian-maximal within the set of PROP1 allocations is NP-hard~\citep{AHMS19a}.

\section*{Acknowledgements}

Haris gratefully acknowledges the UNSW Scientia Fellowship and Defence Science and Technology (DST). He also thanks the RIMS-Research Institute for Mathematical Sciences Kyoto for hosting him. 
Herv{\'{e} and Fedor gratefully acknowledge support from the Basic Research Program of the National Research University
Higher School of Economics.
Fedor's work is also supported by the European Research Council (ERC) under the European Union's Horizon 2020 research and innovation program (grant agreement n$\degree$740435) and by Grant 19-01-00762 of the Russian Foundation for Basic Research.

		\bibliographystyle{plainnat}
		 % \bibliography{abb,bibfile}
	 %
	 %

		% \begin{thebibliography}{1}
	% 	\providecommand{\natexlab}[1]{#1}
	% 	\providecommand{\url}[1]{\texttt{#1}}
	% 	\expandafter\ifx\csname urlstyle\endcsname\relax
	% 	  \providecommand{\doi}[1]{doi: #1}\else
	% 	  \providecommand{\doi}{doi: \begingroup \urlstyle{rm}\Url}\fi
	%
	% 	\bibitem[Aziz et~al.(2019)Aziz, Caragiannis, Igarashi, and Walsh]{ACI+18}
	% 	H.~Aziz, I.~Caragiannis, A.~Igarashi, and T.~Walsh.
	% 	\newblock Fair allocation of combinations of indivisible goods and chores.
	% 	\newblock In \emph{Proc.~of 28th IJCAI}, 2019.
	%
	% 	\end{thebibliography}

		\appendix
	
		\section{Examples}
	
			\begin{example}[Pareto improvement over a PROP1 allocation may not even satisfy PROP1 when there are goods. ]\label{example:goods}
				Consider the following instance with 3 agents and 31 items. Items in sets $B$ and $C$ are divided respectively into 10 and 20 smaller items each 
			\begin{center}
				\scalebox{1}{
				\setlength{\tabcolsep}{6pt}
				\begin{tabular}{c|cccccccccccccc}
					& $A$ & $B=\{b_1,\ldots, b_{10}\}$&$C=\{c_1,\ldots, c_{20}\}$ \\
					\hline
					$1$ &0.3& 0.2&0.5 \\
					$2$ &0.34& 0.16& 0.5 \\
					$3$ &0.16&0.5& 0.34 \\
					%$3$ & \squared{1} & \squared{1} & \squared{1} & \squared{1} & 0 & 0 & 0 & 0 \\
				\end{tabular}
				}
			\end{center} 
	
			Initially the allocation $x$ is as follows (indicated via the squares).  Agent 2 and 3 get utility 0.34 which exceeds the proportionality value. Agent 1 gets total utility 0.2 but the utility increases to 0.5 if agent 1 additionally gets item $A$. Therefore, the allocation is PROP1.
	
			\begin{center}
				\scalebox{1}{
				\setlength{\tabcolsep}{6pt}
				\begin{tabular}{c|cccccccccccccc}
					& $A$ & $B=\{b_1,\ldots, b_{10}\}$&$C=\{c_1,\ldots, c_{20}\}$ \\
					\hline
					$1$ &0.3& \squared{0.2}&0.5 \\
					$2$ &\squared{0.34}& 0.16& 0.5 \\
					$3$ &0.16&0.5& \squared{0.34} \\
					%$3$ & \squared{1} & \squared{1} & \squared{1} & \squared{1} & 0 & 0 & 0 & 0 \\
				\end{tabular}
				}
			\end{center}
	
			Suppose we obtain the following Pareto improving allocation $y$. 
	
			\begin{center}
				\scalebox{1}{
				\setlength{\tabcolsep}{6pt}
				\begin{tabular}{c|cccccccccccccc}
					& $A$ & $B=\{b_1,\ldots, b_{10}\}$&$C=\{c_1,\ldots, c_{20}\}$ \\
					\hline
					$1$ &\squared{0.3}& 0.2&0.5 \\
					$2$ &0.34& 0.16& \squared{0.5} \\
					$3$ &0.16&\squared{0.5}& 0.34 \\
					%$3$ & \squared{1} & \squared{1} & \squared{1} & \squared{1} & 0 & 0 & 0 & 0 \\
				\end{tabular}
				}
			\end{center}
	
			 Agent 2 and 3 get utility 0.5 which exceeds the proportionality value. Agent 1 gets total utility 0.3 but even if agent 1 is given any other item, the total utility does not exceed $1/3$. Therefore, allocation $y$ is not PROP1. $\diamond$  
		 
		\end{example}

			\begin{example}[Pareto improvement over a PROP1 allocation may not even satisfy PROP1 when there are chores.]\label{example:chores}
				Consider the following instance with 3 agents and 12 items. Items in sets $A$ are divided respectively into 10 smaller items.
			
				\begin{center}
					\scalebox{1}{
					\setlength{\tabcolsep}{6pt}
					\begin{tabular}{c|cccccccccccccc}
						& $A=\{a_1,\ldots, a_{10}\}$ & $B$&$C$ \\
						\hline
						$1$ &--0.4&--0.5&--0.1 \\
						$2$ &--0.3&--0.6&--0.1 \\
						$3$ &--0.6&--0.1&--0.3 \\
						%$3$ & \squared{1} & \squared{1} & \squared{1} & \squared{1} & 0 & 0 & 0 & 0 \\
					\end{tabular}
					}
				\end{center}

			Initially the allocation $x$ is as follows (indicated via the squares). 
			The allocation is PROP1.  
		
			\begin{center}
				\scalebox{1}{
				\setlength{\tabcolsep}{6pt}
				\begin{tabular}{c|cccccccccccccc}
					& $A=\{a_1,\ldots, a_{10}\}$ & $B$&$C$ \\
					\hline
					$1$ &--0.4&\squared{--0.5}&--0.1 \\
					$2$ &\squared{--0.3}&--0.6&--0.1 \\
					$3$ &--0.6&--0.1&\squared{--0.3} \\
					%$3$ & \squared{1} & \squared{1} & \squared{1} & \squared{1} & 0 & 0 & 0 & 0 \\
				\end{tabular}
				}
			\end{center}
	
			Suppose we obtain the following Pareto improving allocation $y$. 
	
		\begin{center}
			\scalebox{1}{
			\setlength{\tabcolsep}{6pt}
			\begin{tabular}{c|cccccccccccccc}
				& $A=\{a_1,\ldots, a_{10}\}$ & $B$&$C$ \\
				\hline
				$1$ &\squared{--0.4}&--0.5&--0.1 \\
				$2$ &--0.3&--0.6&\squared{--0.1} \\
				$3$ &--0.6&\squared{--0.1}&--0.3 \\
				%$3$ & \squared{1} & \squared{1} & \squared{1} & \squared{1} & 0 & 0 & 0 & 0 \\
			\end{tabular}
			}
		\end{center}
	
		Allocation $y$ is not PROP1 because even if agent 1 gets rid of one of the small $A$ items, her utility is $-0.36<-1/3$.
			 $\diamond$  
		\end{example}

		{An integral allocation $\pi$} satisfies \emph{proportionality up to extreme item (PROPX)} if for each agent $i\in N$, 
		\begin{itemize}
		\item[$\bullet$] {$\forall o\in \pi_i$} s.t. $u_i(o)<0$: $u_i(\pi_i\setminus \{o\})\geq u_i(O)/n$; and
		\item[$\bullet$] {$\forall o\notin \pi_i$} s.t. $u_i(o)>0$: $u_i(\pi_i\cup \{o\})\geq u_i(O)/n$.
	\end{itemize}

		\begin{example}[a PROPX allocation may not exist for the case of goods]
				\begin{center}
					\scalebox{1}{
					\setlength{\tabcolsep}{6pt}
					\begin{tabular}{c|cccccccccccccc}
						& $a$ & $b$&$c$&$d$&$e$ \\
						\hline
						$1$ & 3&3&3&3&1\\
						$2$ & 3&3&3&3&1\\
						$3$ & 3&3&3&3&1
						%$3$ & \squared{1} & \squared{1} & \squared{1} & \squared{1} & 0 & 0 & 0 & 0 \\
					\end{tabular}
					}
				\end{center}
		
				In any most balanced allocation one agent gets two big items, one agent gets one big item (utility $3$) and the small item $e$, and one agent gets only one big item.

				\begin{center}
					\scalebox{1}{
					\setlength{\tabcolsep}{6pt}
					\begin{tabular}{c|cccccccccccccc}
						& $a$ & $b$&$c$&$d$&$e$ \\
						\hline
						$1$ & \squared{3}&\squared{3}&3&3&1\\
						$2$ & 3&3&3&\squared{3}&\squared{1}\\
						$3$ & 3&3&\squared{3}&3&1
						%$3$ & \squared{1} & \squared{1} & \squared{1} & \squared{1} & 0 & 0 & 0 & 0 \\
					\end{tabular}
					}
				\end{center}
			
				The last agent does not achieve the proportionality value of $13/3>4$ even if she gets the small item. Therefore PROPX is not satisfied. 
				\label{example:propx}
				\end{example}

	\end{document}